\pdfoutput=1
\RequirePackage{ifpdf}
\ifpdf 
\documentclass[pdftex]{sigma}
\else
\documentclass{sigma}
\fi

\newcommand{\ed}{\mathrm{d}}
\newcommand{\Der}[2]{{{#1}_*}_{#2}}
\newcommand{\e}[1]{\mathrm{e}^{#1}}
\newcommand{\R}{\mathbb{R}}
\newcommand{\Lie}[2]{\operatorname{\mathsterling}_{#1}{#2}}

\numberwithin{equation}{section}

\newtheorem{Theorem}{Theorem}[section]
\newtheorem{Corollary}[Theorem]{Corollary}

\newtheorem{Proposition}[Theorem]{Proposition}
 { \theoremstyle{definition}
\newtheorem{Definition}[Theorem]{Definition}

\newtheorem{Example}[Theorem]{Example}
\newtheorem{Remark}[Theorem]{Remark} }

\begin{document}
\allowdisplaybreaks

\newcommand{\arXivNumber}{1807.00873}

\renewcommand{\PaperNumber}{015}

\FirstPageHeading
	
\ShortArticleName{A Geometric Approach to the Concept of Extensivity in Thermodynamics}
	
\ArticleName{A Geometric Approach to the Concept of Extensivity\\ in Thermodynamics}
	
\Author{Miguel \'Angel GARC\'IA-ARIZA}
	
\AuthorNameForHeading{M.\'A.~Garc\'ia-Ariza}
	
\Address{Instituto de Ciencias, Benem\'erita Universidad Aut\'onoma de Puebla,\\
72750, Puebla, Pue., Mexico}
\Email{\href{mailto:magarciaariza@gmail.com}{magarciaariza@gmail.com}}

\ArticleDates{Received May 24, 2018, in final form February 22, 2019; Published online March 02, 2019}
	
\Abstract{This paper presents a rigorous treatment of the concept of \textit{extensivity} in equilibrium thermodynamics from a geometric point of view. This is achieved by endowing the manifold of equilibrium states of a system with a smooth atlas that is compatible with the pseudogroup of transformations on a vector space that preserve the radial vector field. The resulting geometric structure allows for accurate definitions of extensive differential forms and scaling, and the well-known relationship between both is reproduced. This structure is represented by a global vector field that is locally written as a radial one. The submanifolds that are transversal to it are embedded, and locally defined by extensive functions.}
	
\Keywords{homogeneous functions; extensive variables; equilibrium thermodynamics}
	
\Classification{80A05; 80A10}

\rightline{\textit{To the memory of Salvador Alejandro Ju\'arez-Reyes.}}

\section{Introduction}

The concept of \textit{extensivity} plays a central role in equilibrium thermodynamics. Remarkably, it lacks a precise geometric formulation\footnote{The theory of quasi-homogeneous functions over vector spaces is well established. Manifolds of equilibrium states, in contrast, lack an algebraic structure.}, in spite of the increasing interest in the application of differential geometry to this branch of physics.

A geometric notion of extensive functions, not relying on particular coordinate expressions, is presented herein. This is done by importing the radial vector field $R$ of the Euclidean space to a~smooth finite-dimensional manifold $M$ (the manifold of equilibrium states of a thermodynamic system) via suitable local parameterizations. Thereby, a global vector field $\rho$ that is locally written like an Euler vector field is induced. By means of it and using Euler's theorem, we are able to define extensive functions on~$M$, and then extend this definition to differential forms. The notion of \textit{extensivity} presented here recovers locally all the features that are well known in thermodynamics. Namely, extensive functions are degree-1 homogeneous functions of the extensive variables of a system and extensive differential forms are scaled correspondingly. The submanifolds of $M$ that are transversal to~$\rho$ generalize geometrically the distribution defined by adiabatic hypersurfaces and manifolds of equlibrium states of closed systems, in the sense that they are locally defined by extensive functions.

Many of the basic concepts of equilibrium thermodynamics have long ago been established in a coordinate-free language. For instance, the statement of the second law of thermodynamics was translated to the integrability of certain distribution by Carath\'eodory~\cite{caratheodory1909a}. Another example of this kind is the first law of thermodynamics, which is elegantly formulated in terms of contact geometry~\cite{Hermann1973a}. The more modern geometric approaches to the subject aim to describe the critical points of a thermodynamic system and its underlying microscopic interactions using the scalar curvature of certain metrics~\cite{quevedo2011a, ruppeiner2010a}. A~common feature of all these treatments is that extensive functions are defined as degree-1 homogeneous functions of a preferred global coordinate chart that is known \textit{a~priori} from physical considerations. Assuming its existence forces the topological structures appearing in thermodynamics to be Euclidean. It also leads to further specific problems, as is explained below.

Like extensive functions, extensive differential forms are an important ingredient of the geometric formulation of thermodynamics. These are forms that scale uniformly under the flow of the \textit{Euler vector field} on $M$, given by $Y=x^i\partial_i$, where $x^1,\ldots,x^n$ denote the extensive variables of the system and Einstein's summation convention is used~\cite{belgiorno2003a}. A relevant example of extensive 1-form is the infinitesimal heat of a system, denoted here by $\vartheta$. Its being extensive means that~$Y$ is a symmetry thereof. Upon assuming that~$\vartheta(Y)\neq0$, it follows that~$\vartheta(Y)$ is an integrating factor of~$\vartheta$~\cite{saurel1905a}, provided that it is integrable according to the second law of thermodynamics. Thus, it is proportional to the differential of an extensive function~-- entropy~-- which is unique modulo a constant scale factor. Notice that this result lies upon a coordinate-dependent definition of $Y$. Besides constraining the global topological structure of~$M$, relying on coordinates to define~$Y$ prevents us from regarding entropy as a genuinely geometric object.

Let us consider now the problem of defining extensive functions geometrically under the contact-geometric approach to thermodynamics. The setting of this formalism is the \textit{thermodynamic phase space}, which is a contact manifold $\big(P^{2n+1},\varTheta\big)$ \cite{bravetti2019a, mrugala1991a}. The space of states of a~thermodynamic system is a submanifold $\imath\colon M^n\hookrightarrow P$ satisfying $\imath^*\varTheta=0$. Extensive variables arise as restrictions to $M$ of global Darboux coordinates, i.e., coordinates for which the contact form is written as $\varTheta=\ed w-p_i\ed q^i$. The most appealing feature of this geometric treatment of thermodynamics is that $\varTheta$ is Legendre-invariant. This turns out to be also the main drawback of this framework: since the former mappings lack a coordinate-free description, there is no means to identify the extensive variables in an arbitrary set of Darboux coordinates~\cite{quevedo2007a}.

As was mentioned before, there is a line of research that aims to describe phase transitions by means of curvature. One of these approaches is geometrothermodynamics, which is a~contact-Riemannian-geometric formalism on $P$ that studies critical phenomena and microscopic interactions via the scalar curvature of metrics induced on $M$~\cite{quevedo2007a}. These are defined to be Legendre-invariant metrics compatible with the contact structure of $P$. The former condition yields a whole family of metrics, whose members may be singled out by using a coordinate-dependent notion of \textit{extensivity} on~$M$~\cite{quevedo2017a}. To this end, it is necessary to know the extensive variables of the system \textit{a priori}.

The problem of defining \textit{extensivity} from a coordinate-free point of view has been addressed before in the context of Ruppeiner geometry. Like geometrothermodynamics, this seminal theory also relates scalar curvature to phase transitions and microscopic interactions~\cite{ruppeiner2010a}. The metrics involved are defined by the Hessian of entropy, whose geometric description requires that $M$ be endowed with a flat affine connection~$\nabla$. Demanding that the Christoffel symbols of the latter vanish in the frame induced by extensive variables amounts to defining a global vector field $\varrho$ that forms, together with $\nabla$, a radiant structure on~$M$~\cite{garciaariza2018a}. Motivated by Euler's theorem, extensive functions are defined geometrically by means of $\varrho$.

The present work improves the previous attempts to describe \textit{extensivity} in geometric terms. Like in Ruppeiner geometry, this is done using an appropriate vector field $\rho$ and Euler's theorem, as was said before. The structure involved is less robust than the former, though, since it lacks a connection. Extensive variables are local under this approach, and therefore their existence on $M$ does not restrict its topology to be globally Euclidean.

Under the approach of this paper, the Euler vector field $Y$ mentioned above is the local version of $\rho$. The latter is also a symmetry of infinitesimal heat, and provides an integrating factor for $\vartheta$. Its global coordinate-independent definition allows us to portray thermodynamics as a genuinely geometric theory, following~\cite{belgiorno2010a} (see equation~\eqref{eq:Belgiorno} below). The existence and uniqueness (modulo a constant scale factor) of entropy are a consequence of a feature that \textit{all} manifolds transversal to $\rho$ share: they are locally defined by extensive functions which are unique up to a constant scale factor.

This paper is organized as follows. The main definitions are presented in Section~\ref{sec:ev}. Moreover, it is shown that the notion of \textit{extensivity} provided here agrees with the common one that relies on scaling. The Euler equation also holds in this case. As was mentioned before, endowing $M$ with a suitable structure to describe extensive variables is equivalent to defining a global vector field $\rho$ having locally the form of an Euler vector field. Hence, the question of existence of an extensive structure on $M$ may be translated to the analysis of the singularities of vector fields. We briefly deal with this relationship in Section~\ref{sec:existence}. The submanifolds of $M$ transversal to $\rho$ are studied in Section~\ref{sec:transversal}. Finally, Section~\ref{sec:conclusion} is devoted to concluding remarks.

\section{Definitions and basic results}\label{sec:ev}
In what follows, all vector and tensor fields are assumed to be smooth. We denote by $M$ the finite-dimensional manifold of equilibrium states of a pure, simple thermodynamic system.

We have mentioned before that the usual concept of \textit{extensive functions} in thermodynamics relies on the well-known notion of \textit{homogeneous function} \cite{grudzinski1991a}. Recall that if $V$ is an $n$-dimensional real vector space and $U$ is an open subset of $V$, we say $f\colon U\to\mathbb{R}$ is a \emph{degree-$1$ homogeneous function} if
\begin{gather}\label{eq1ohf}
f(\lambda v)=\lambda f(v),
\end{gather}
for all $v\in U$ and $\lambda\in{}]0,\infty[$ for which $\lambda v\in U$. Immediate examples of degree-1 homogeneous functions are real-valued linear functions on~$V$.

A remarkable feature of equation~\eqref{eq1ohf} is that it does not involve any particular set of coordinates. If we wished to make a similar definition on~$M$, we would require a group action of~$]0,\infty[$ on $M$ whose definition is coordinate independent. Physically, this amounts to describing the scaling of systems without referring to extensive variables. We will circumvent this task and follow an alternative route to \textit{extensivity}, pointed out by a well-known theorem by Euler (see, for instance,~\cite{anosov1997a}), which establishes that smooth degree-1 homogeneous functions may be written in terms of their derivative along the radial vector field $R$ on $V$. For the sake of self-containment, we provide a proof of this result. We remind the reader that, since the tangent bundle of $V$ may be \textit{canonically} identified with $V\times V$, $R$ can be written in a coordinate-free fashion as $R=\mathrm{id}\times\mathrm{id}$, where $\mathrm{id}$ represents the identity mapping. As usual, we denote by $\ed$ the exterior derivative of $k$-forms and by $\mathrm{C}^\infty(U)$ the set of smooth functions defined on an open set~$U$.

\begin{Theorem}[Euler]\label{Theorem:Euler}Let $U\subset V$ be open and $f\in \mathrm{C}^\infty(U)$. Then $f$ is a degree-1 homogeneous function if and only if
\begin{gather}\label{eq:Euler}
\ed f(R)=f.
\end{gather}
\end{Theorem}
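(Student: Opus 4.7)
The plan is to reduce the statement to a one-variable calculus argument along each ray through the origin. For the forward direction, I would fix $v\in U$, let $I_v\subset{]0,\infty[}$ be the open set of $\lambda$ with $\lambda v\in U$, and differentiate \eqref{eq1ohf} with respect to $\lambda$ at $\lambda=1$. Under the canonical identification $T_vV\cong V$, one has $R|_v=v$, so the chain rule yields $\ed f_v(v)=\ed f(R)|_v$ on the left-hand side, while the right-hand side differentiates to $f(v)$; this is precisely \eqref{eq:Euler}.

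For the converse, assume \eqref{eq:Euler} and fix $v\in U$. I would define $g\colon I_v\to\R$ by $g(\lambda):=f(\lambda v)/\lambda$ and show that $g'\equiv 0$. The key observation is that $R|_{\lambda v}=\lambda v$, so by linearity of $\ed f_{\lambda v}$ one has $\ed f_{\lambda v}(v)=\lambda^{-1}\ed f(R)|_{\lambda v}=\lambda^{-1}f(\lambda v)$ thanks to \eqref{eq:Euler}. A direct computation then gives
\[
g'(\lambda)=\frac{\ed f_{\lambda v}(v)}{\lambda}-\frac{f(\lambda v)}{\lambda^{2}}=\frac{f(\lambda v)}{\lambda^{2}}-\frac{f(\lambda v)}{\lambda^{2}}=0.
\]
Hence $g$ is constant on the connected component of $I_v$ containing $1$, and since $g(1)=f(v)$, we recover $f(\lambda v)=\lambda f(v)$, as required.

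The only subtlety I foresee is that $I_v$ need not be connected when $U$ is not a cone about the origin; the natural reading is that \eqref{eq1ohf} is imposed (and recovered) on each connected component of $I_v$ separately, and since the whole argument is purely local along each ray this causes no real difficulty. I do not expect any serious obstacle: once the identification $R|_v=v$ is made explicit, the proof is essentially a one-line ODE computation in each direction.
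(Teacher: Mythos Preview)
Your proposal is correct and follows essentially the same approach as the paper: both reduce the statement to a one-variable computation along each ray through the origin. The only cosmetic difference is that the paper parameterizes the ray by the flow of $R$, writing $\gamma(t)=\e{t}v$ and solving the ODE $(f\circ\gamma)'=f\circ\gamma$, whereas you use the multiplicative parameter $\lambda$ directly and show that $f(\lambda v)/\lambda$ has vanishing derivative; the two are related by the change of variable $\lambda=\e{t}$.
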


\begin{proof}Let $\gamma$ denote the integral curve of $R$ starting at $v$ (this is, $\gamma\colon \R\to V$ is given by $\gamma(t)=\e{t}v$, for all $t\in\mathbb{R}$). If $U$ is an open subset of $V$ and $f\in\mathrm{C}^\infty(U)$ is a degree-1 homogeneous function, then $f\circ\gamma (t)=\e{t}f(v)$ for all $t$ lying in some open interval around $t=0$. Therefore, $\ed f_v(R_v)=\ed(f\circ\gamma)/\ed t|_{t=0}=f(v)$. Since $v$ is an arbitrary element of $V$, we obtain equation~\eqref{eq:Euler}.

Conversely, if $U$ is an open subset of $V$ and $f\in\mathrm{C}^\infty(U)$ satisfies $\ed f(R)=f$, then $\ed (f\circ\gamma)/(\ed t)(t)=f\circ\gamma (t)$ for all $t\in\R$ such that $\gamma(t)\in U$. Integrating the last equation yields $f\circ\gamma(t)=\e{t}f\circ\gamma(0)$. This means that $f\big(\e{t}v\big)=\e{t}f(v)$, for any $v\in V$ and $t\in\mathbb{R}$ satisfying $\e{t}v\in U$, whence $f$ is a degree-1 homogeneous function.
\end{proof}

Henceforth, we shall consider $V=\mathbb{R}^n$. Equation~\eqref{eq:Euler} suggests that pushing $R$ forward to $M$ consistently through local parameterizations might help to define extensive functions thereon. By \textit{consistently}, we mean both that the resulting vector field is globally defined and that it does not depend on the parameterization. To be more precise, if $(U_1,\phi_1)$ and $(U_2,\phi_2)$ are two overlapping charts belonging to the smooth atlas of $M$, then for any $x\in U_1\cap U_2$,
\begin{gather} \label{eq:R}
{{\phi_1}^{-1}_*}_{\phi_1(x)}\left(R_{\phi_1(x)}\right)={{\phi_2}^{-1}_*}_{\phi_2(x)} (R_{\phi_2(x)} )
\end{gather}
must hold (we denote by ${F_*}_p$ the derivative of a mapping $F$ on a point $p$). Equivalently, both the transition function $\psi_{12}=\phi_2\circ\phi_1^{-1}$ and its inverse $\psi_{21}$ must leave $R$ invariant, i.e., $\psi_{12}$ has to satisfy ${{\psi_{12}}_*}_p(R_p)=R_{\psi_{12}(p)}$, for any $p\in\phi_1(U_1\cap U_2)$, and ${{\psi_{21}}_*}_q(R_q)=R_{\psi_{21}(q)}$, for any $q\in\phi_2(U_1\cap U_2)$. These last conditions motivate the following.

\begin{Definition} We refer to diffeomorphisms $F$ defined on an open subset of $\mathbb{R}^n$ satisfying \begin{gather}\label{eq:H}
{F_*}_p(R_p)=R_{F(p)}
\end{gather}
 as \textit{degree-$1$ homogeneous diffeomorphisms}. The set of all such diffeomorphisms will be denoted by $H$.
\end{Definition}

Observe that equation~\eqref{eq:H} can be regarded as the analogue of equation~\eqref{eq:Euler}, provided that~$R_{F(p)}$ may be identified with~$F(p)$. In fact, the similarity goes beyond a mere analogy, as the elements of $H$ behave like degree-1 homogeneous functions under scaling.

\begin{Proposition}
A diffeomorphism $F$ defined on an open set $U$ of the Euclidean space belongs to $H$ if and only if
\begin{gather}\label{eq:d1hd}
F(\lambda p)=\lambda F(p),
\end{gather}
for all $p\in U$ and all $\lambda\in{} ]0,\infty[$ for which $\lambda p\in U$.
\end{Proposition}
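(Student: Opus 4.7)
The plan is to use Theorem~\ref{Theorem:Euler} as a template and realize equation~\eqref{eq:H} as the infinitesimal counterpart of equation~\eqref{eq:d1hd}. Since the flow of $R$ on $\R^n$ is $\Phi_t(v)=\e{t}v$, the condition that $F$ push $R$ forward to itself is equivalent to asking that $F$ commute with $\Phi_t$, which is precisely the scaling relation~\eqref{eq:d1hd} after the substitution $\lambda=\e{t}$. The proof will therefore unfold in two symmetric steps, integrating or differentiating along the very integral curves of $R$ that appeared in the proof of Theorem~\ref{Theorem:Euler}.

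For the forward direction, I would fix $p\in U$, let $\gamma(t)=\e{t}p$ be the integral curve of $R$ starting at $p$, and restrict $t$ to the open set on which $\gamma(t)\in U$. There the chain rule together with~\eqref{eq:H} gives $(F\circ\gamma)'(t)={F_*}_{\gamma(t)}(R_{\gamma(t)})=R_{F(\gamma(t))}$, so $F\circ\gamma$ is itself an integral curve of $R$, starting at $F(p)$. By uniqueness of integral curves, $F(\e{t}p)=\e{t}F(p)$, and setting $\lambda=\e{t}$ produces~\eqref{eq:d1hd}.

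For the converse, assuming~\eqref{eq:d1hd}, I would differentiate the identity $F(\e{t}p)=\e{t}F(p)$ at $t=0$. Under the canonical identification used earlier to write $R=\mathrm{id}\times\mathrm{id}$, the derivative of the left-hand side is ${F_*}_p(R_p)$, while the derivative of the right-hand side is $F(p)=R_{F(p)}$; this is exactly equation~\eqref{eq:H}.

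The one delicate point I anticipate is the bookkeeping of domains: the flow argument naturally produces~\eqref{eq:d1hd} on the connected component of $\lambda=1$ inside $\{\lambda>0\colon\lambda p\in U\}$, so the statement as phrased tacitly requires that this admissible set be an interval. This holds, for instance, when $U$ is star-shaped with respect to the origin, which is the geometrically natural setting in which to develop a theory of degree-1 homogeneous diffeomorphisms.
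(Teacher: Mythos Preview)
Your argument is correct and close in spirit to the paper's, but organized differently. The paper writes $F=(F^1,\ldots,F^n)$ in Cartesian coordinates, observes that condition~\eqref{eq:H} is equivalent to $\ed F^i(R)=F^i$ for each~$i$, and then invokes Theorem~\ref{Theorem:Euler} component-wise; this recycles the scalar result already proven and is accordingly very short. You instead re-run the flow argument of that theorem directly for the vector-valued map~$F$, casting~\eqref{eq:H} as the statement that $F$ maps integral curves of~$R$ to integral curves of~$R$ and appealing to uniqueness. The two routes are equivalent in content; yours makes the geometric picture (push-forward of~$R$ $\Leftrightarrow$ commutation with the flow of~$R$) explicit, at the cost of redoing work already packaged in Theorem~\ref{Theorem:Euler}. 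Your caveat about the connected component of $\{\lambda>0:\lambda p\in U\}$ containing $\lambda=1$ is well taken and applies equally to the paper's own proof of Theorem~\ref{Theorem:Euler}; the paper does not address it.
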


\begin{proof}Let $U$ be an open subset of $\mathbb{R}^n$ and $\big(u^1,\ldots,u^n\big)$ denote the cartesian coordinates thereon. By defining $F^i:=u^i\circ F$ for each $i\in\{1,\ldots,n\}$, it can readily be seen that $F\in H$ if and only if every $F^i$ satisfies equation~\eqref{eq:Euler}, whence the result follows.
\end{proof}

Stemming from the proposition above, linear operators on~$\mathbb{R}^n$ are straightforward examples of degree-1 homogeneous diffeomorphisms. The aim of introducing the latter is to state equation~\eqref{eq:R} in the language of atlases compatible with pseudogroups of transformations. This can be achieved owing to the fact below.

\begin{Proposition}The set of degree-$1$ homogeneous diffeomorphisms on the Euclidean space is a group.
\end{Proposition}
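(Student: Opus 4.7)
My plan is to verify, in the partial sense in which composition is defined only where domains match, the three essential axioms: the identity belongs to $H$, $H$ is closed under composition, and $H$ is closed under taking inverses. Associativity is inherited from the composition of mappings, and the identity case is immediate from equation~\eqref{eq:H}.

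For closure under composition, I would invoke the chain rule: given $F,G\in H$ whose domains permit the composition $G\circ F$, one has
\[
\Der{(G\circ F)}{p}(R_p)=\Der{G}{F(p)}\bigl(\Der{F}{p}(R_p)\bigr)=\Der{G}{F(p)}\bigl(R_{F(p)}\bigr)=R_{(G\circ F)(p)},
\]
where the two middle equalities apply the hypothesis~\eqref{eq:H} to $F$ and to $G$, respectively. For inverses, I would start from the identity $F\circ F^{-1}=\mathrm{id}$, differentiate to obtain $\Der{F}{F^{-1}(q)}\circ \Der{F^{-1}}{q}=\mathrm{id}$ on tangent spaces, and then apply equation~\eqref{eq:H} to $F$ at the point $F^{-1}(q)$; since $\Der{F}{F^{-1}(q)}(R_{F^{-1}(q)})=R_q$, inverting the first factor yields $\Der{F^{-1}}{q}(R_q)=R_{F^{-1}(q)}$, as required.

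An entirely parallel proof can be obtained from the scaling characterization just established in the preceding proposition: the three verifications reduce to elementary algebraic manipulations of the identity $F(\lambda p)=\lambda F(p)$, with closure under inversion following by substituting $p\mapsto F^{-1}(q)$. I do not anticipate any genuine obstacle; the only point requiring care is the bookkeeping of domains, since elements of $H$ are defined on arbitrary open subsets of $\mathbb{R}^n$, so the word ``group'' here must be understood in the partial sense consistent with the pseudogroup language used earlier in the paper.
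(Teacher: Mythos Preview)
Your proposal is correct and follows essentially the same route as the paper: identity is immediate, closure under composition is the chain rule applied twice, and closure under inverses comes from differentiating $F\circ F^{-1}=\mathrm{id}$ and using the injectivity of $\Der{F}{F^{-1}(q)}$ together with~\eqref{eq:H} at $F^{-1}(q)$. The paper merely reverses the order (inverses first, then composition) and phrases the inverse step as comparing the two equalities $\Der{F}{F^{-1}(p)}(R_{F^{-1}(p)})=R_p$ and $R_p=\Der{F}{F^{-1}(p)}\circ\Der{F^{-1}}{p}(R_p)$, which is the same manipulation you describe; your additional remarks on the scaling characterization and on the domain bookkeeping are extra but not at odds with the paper.
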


\begin{proof}Since the identity mapping $\mathrm{id}$ belongs to $H$, the latter is nonempty.

Suppose that $F\in H$. Then, for any $p\in\R^n$, $\Der{F}{F^{-1}(p)}(R_{F^{-1}(p)})=R_p$. Besides, $R_p=\Der{F}{F^{-1}(p)}\circ\Der{F^{-1}}{p}(R_p)$. The last two expressions imply that $\Der{F^{-1}}{p}(R_p)=R_{F^{-1}(p)}$, whence $F^{-1}\in H$.

Finally, given $F_1,F_2\in H$ and $p\in\R^n$, we have that $\Der{(F_1\circ F_2)}{p}(R_p)=\Der{F_1}{F_2(p)}\allowbreak(\Der{F_2}{p}(R_p))=\Der{F_1}{F_2(p)}(R_{F_2(p)})=R_{F_1\circ F_2(p)}$. Thus, $F_1\circ F_2\in H$, which completes the proof.
\end{proof}

We denote by $\mathcal{H}$ the pseudogroup of transformations on~$\mathbb{R}^n$ formed by restrictions of elements of~$H$ to open subsets of~$\mathbb{R}^n$.

Demanding that $R$ is pushed forward to $M$ consistently by local parameterizations means that the corresponding transition functions must belong to $\mathcal{H}$. In more sophisticated terms, we need to furnish $M$ with an atlas compatible with $\mathcal{H}$. We may readily see that the vector field $\rho$ defined on $M$ as
\begin{gather}\label{eq:defDRM}
\rho_p:=\Der{\phi^{-1}}{\phi(p)}(R_{\phi(p)}),
\end{gather}
for each $p\in M$, is both well and globally defined, provided that $\phi$ corresponds to a chart whose domain contains $p$ and that belongs to the aforementioned atlas.

Using the vector field above and inspired by Theorem~\ref{Theorem:Euler}, we can define extensive functions on~$M$. In what follows, we shall assume that the latter is furnished with an atlas compatible with $\mathcal{H}$, which will be denoted by $\mathcal{A}_\mathcal{H}$.

\begin{Definition}\label{Definition:ef}Let $U$ be an open subset of $M$. We say that $f\in\mathrm{C}^\infty(U)$ is an \textit{extensive function} if $\ed f(\rho)=f$.
\end{Definition}

A straightforward example of extensive functions are the coordinate functions that correspond to charts belonging to $\mathcal{A}_\mathcal{H}$. This follows upon observing that if~$f$ is an extensive function defined on a neighborhood of a point $p\in M$ and $\phi$ is a coordinate transformation corresponding to an element of $\mathcal{A}_\mathcal{H}$ around $p$, then
\begin{gather}\label{eq:ef}
\ed f_p(\rho_p)=\ed\big(f\circ\phi^{-1}\big)_{\phi(p)} (R_{\phi(p)} ).
\end{gather}

The equation above has two important, straightforward consequences. We express the first one in the next proposition.

\begin{Theorem}\label{Theorem:ef}Let $U$ be an open subset of~$M$. A function $f\in\mathrm{C}^\infty(U)$ is extensive if and only if for any chart $(W,\phi)\in\mathcal{A}_{\mathcal{H}}$ with $W\subset U$, $f\circ\phi^{-1}$ is a degree-$1$ homogeneous function on~$\mathbb{R}^n$.
\end{Theorem}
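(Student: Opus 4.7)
The plan is to deduce this theorem as an essentially immediate consequence of two ingredients already in place: Euler's Theorem (Theorem~\ref{Theorem:Euler}) applied on $\mathbb{R}^n$, and the compatibility identity~\eqref{eq:ef} between $\rho$ and the radial vector field $R$ in a chart. The whole point of defining $\rho$ via~\eqref{eq:defDRM} was to make $\rho$ the local pushforward of $R$, so $\rho$-differentiation of $f$ should correspond, chart-by-chart, to $R$-differentiation of $f\circ\phi^{-1}$. Once this translation is in hand, the equivalence between $\ed f(\rho)=f$ on $U$ and the Euler equation~\eqref{eq:Euler} for $f\circ\phi^{-1}$ on $\phi(W)$ is purely local.

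For the forward direction, I would assume that $f$ is extensive, fix an arbitrary chart $(W,\phi)\in\mathcal{A}_\mathcal{H}$ with $W\subset U$, and evaluate~\eqref{eq:ef} at an arbitrary $p\in W$. The left-hand side equals $f(p)$ by hypothesis, and the right-hand side equals $\ed(f\circ\phi^{-1})_{\phi(p)}(R_{\phi(p)})$; since $f(p)=(f\circ\phi^{-1})(\phi(p))$ and $\phi(p)$ ranges over all of $\phi(W)$, the function $f\circ\phi^{-1}$ satisfies~\eqref{eq:Euler} on the open set $\phi(W)\subset\mathbb{R}^n$. Theorem~\ref{Theorem:Euler} then gives that $f\circ\phi^{-1}$ is degree-$1$ homogeneous.

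For the converse, I would pick an arbitrary $p\in U$ and choose a chart $(W_0,\phi)\in\mathcal{A}_\mathcal{H}$ containing $p$. Because $\mathcal{A}_\mathcal{H}$ is compatible with the pseudogroup $\mathcal{H}$, the restriction $(W_0\cap U,\phi|_{W_0\cap U})$ still belongs to the atlas, so without loss of generality $W\subset U$. By hypothesis $f\circ\phi^{-1}$ is degree-$1$ homogeneous on $\phi(W)$, hence Theorem~\ref{Theorem:Euler} yields $\ed(f\circ\phi^{-1})_{\phi(p)}(R_{\phi(p)})=(f\circ\phi^{-1})(\phi(p))=f(p)$. Feeding this into~\eqref{eq:ef} gives $\ed f_p(\rho_p)=f(p)$, and since $p\in U$ was arbitrary, $f$ is extensive.

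I do not expect any real obstacle: the content of the theorem is that extensivity on $M$ is the same as the classical Euler condition read through any chart of $\mathcal{A}_\mathcal{H}$, and~\eqref{eq:ef} already packages this bijection. The only small point to keep in mind is the pseudogroup-compatibility remark above, which is what allows one to assume $W\subset U$ on the converse side without losing chart membership.
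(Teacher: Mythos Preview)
Your proposal is correct and follows exactly the approach the paper intends: the paper presents Theorem~\ref{Theorem:ef} as a ``straightforward consequence'' of the chain-rule identity~\eqref{eq:ef} combined with Euler's Theorem~\ref{Theorem:Euler}, and that is precisely what you do. The only cosmetic point worth noting is that the paper happens to state~\eqref{eq:ef} under the hypothesis that $f$ is extensive, whereas you (correctly) use it for an arbitrary smooth $f$ in the converse direction; since~\eqref{eq:ef} is nothing but the chain rule applied to the definition~\eqref{eq:defDRM} of $\rho$, it holds regardless, so this is not a gap.
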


It is worth observing that, if $(W',\psi)\in\mathcal{A}_\mathcal{H}$ is any other chart whose domain overlaps with the above-mentioned $W$, then $f\circ\psi^{-1}$ is also a degree-1 homogeneous function. This follows from writing $f\circ\psi^{-1}$ as $f\circ\phi^{-1}\circ\phi\circ\psi^{-1}$ and applying the chain rule.

The other significant by-product of equation~\eqref{eq:ef} is that $\rho|_U=x^i\partial_i$, provided that $(U,(x^1,\ldots$, $x^n))\in\mathcal{A}_\mathcal{H}$. This means that an extensive function $f$ whose domain overlaps with $U$ is locally written as~$x^i\partial_if$, which prompts the following definition.

\begin{Definition}\label{Definition:ev}An \textit{extensive variable} on $M$ is a coordinate function of a chart belonging to~$\mathcal{A}_\mathcal{H}$. The latter shall be referred to as an \emph{extensive structure} on $M$. The charts belonging to the extensive structure of $M$ are called \emph{extensive charts}, and their domains \textit{extensive domains}. The pair $(M,\mathcal{A}_\mathcal{H})$ is called \emph{extensive manifold}.
\end{Definition}

Notice that the contents of Theorem~\ref{Theorem:ef} and Definition~\ref{Definition:ev} together may be rephrased in the standard terms of equilibrium thermodynamics: \textit{extensive functions are degree-$1$ homogeneous functions of any extensive variables of the system}.

\begin{Remark}We write \textit{any extensive variables} and not \textit{the extensive variables}, because these are defined up to an extensive function. In other words, any non-zero extensive function is itself an extensive variable.
\end{Remark}

Finite-dimensional vector spaces endowed with their standard smooth structures are straightforward examples of extensive manifolds. Equilibrium thermodynamics provides other less tri\-vial instances of the latter, as we illustrate below. In what follows, $\Omega^k(U)$ denotes the set of differential $k$-forms defined on an open set $U$.

\begin{Example}\label{Example:ig}We begin by considering the space of equilibrium states of an ideal gas $M_\text{ig}$. We assume that the internal energy, the volume, and the number of particles of the system, denoted respectively by $U$, $V$, and $N$, comprise a global extensive chart thereon. We regard $\vartheta=\mathrm{d}U+p\ed V-\mu\ed N\in\Omega^1(M_\text{ig})$ and $\rho=U\partial_U+V\partial_V+N\partial_N\in\mathfrak{X}(M_\text{ig})$ as the two objects that define the geometric structure of~$M_\text{ig}$. As is usual, the functions $p$ and $\mu$ are given by $p=cU/V$ and $\mu=-U(cN)^{-1}\big[\ln\big(KU^cVN^{-(c+1)}\big)+c+1\big]$, where $K$ and~$c$ are real, positive constants~\cite{callen1985a}.

We require that the geometric structure defined above on $M_\text{ig}$ be well defined for every equilibrium state. This means that both $\vartheta$ and $\rho$ never vanish, which makes $M_\text{ig}$ diffeomorphic, via the global coordinate mapping $\phi=(U,V,N)$, to the first octant of the Euclidean 3-space, denoted by $O_1$. The manifold of states $M_\text{ig}$ (or, equivalently, $O_1$) is an example of an extensive manifold that is not a vector space.
\end{Example}

We have mentioned that not only functions, but also extensive differential forms are an important concept in thermodynamics. They may be readily defined using $\ed f(\rho)=\Lie{\rho}f$ to extend Definition \ref{Definition:ef} .
\begin{Definition}\label{Definition:edf}
Let $U$ be an open subset of $M$, and $k\in\mathbb{N}$. A differential $k$-form $\omega\in\Omega^k(U)$ is \emph{extensive} if $\Lie{\rho}{\omega}=\omega$.
\end{Definition}

Two straightforward instances of extensive differential forms are those representing \textit{infinitesimal heat} and \textit{infinitesimal work} in thermodynamics. The heat form is of particular importance, due both to its geometric properties and the fact that, according to~\cite{belgiorno2010a}, it is sufficient to determine uniquely a thermodynamic system, as we shall explain in the sequel.

Notice that the notion of extensivity can be readily extended to tensor fields, in general. This idea is useful in the Riemannian geometric approaches to equilibrium thermodynamics as we now explain.

\begin{Example}\label{Example:GTD}A coordinate-free approach to Ruppeiner geometry requires that $M$ be endowed with a symmetric $(2,0)$ tensor $g^\text{R}$ and a flat affine connection $\nabla$ satisfying $\nabla\rho=\mathrm{id}$ and $\nabla_Xg^\text{R}(Y,Z)=\nabla_Yg^\text{R}(X,Z)$, for all local smooth vector fields $X$, $Y$, and $Z$ \cite{garciaariza2018a}. The last equation implies that $g^\text{R}$ is locally written as the Hessian of a smooth function $\varPhi$.

We shall consider that $\varPhi$ above is a thermodymamic potential that is \textit{extensive of degree $\beta$}, this is, we will assume that $\varPhi$ satisfies $\Lie{\rho}{\varPhi}=\beta\varPhi$, for some real number $\beta$. When $\beta=1$, then $\varPhi$ is the potential of an \textit{ordinary} thermodynamic system and we recover Definition \ref{Definition:ef}. Otherwise, the system is called \textit{nonordinary} \cite{quevedo2017a}.

Let $\rho^\flat$ stand for the 1-form $X\mapsto g^\text{R}(\rho,X)$. A straightforward computation yields that, locally, $\rho^\flat=(\beta-1)\ed{\varPhi}$. Consequently, ordinary thermodynamic systems are characterized by having $\rho$ as a null vector of $g^\text{R}$.

Motivated by the definition of Quevedo's Legendre-invariant metrics that describe first-order phase transitions \cite{quevedo2010a}, we shall determine the local form of all tensor fields $g$ conformal to $g^\text{R}$ satisfying
\begin{gather}\label{eq:Quevedo}
\Lie{\rho}{g}=2\beta g.
\end{gather}
We begin by observing that $\Lie{\rho}{g^\text{R}}=\beta g^\text{R}$, whence equation~\eqref{eq:Quevedo} implies that, if $g=\lambda g^\text{R}$, then
\begin{gather}\label{eq:conf}
\Lie{\rho}{\lambda}=\beta\lambda.
\end{gather}

As is usually the case in thermodynamics, let us assume that $\rho$ never vanishes. This means that we can always choose a local chart $\big(U,\big(x^1,\ldots,x^n\big)\big)$ so that $x^1\neq0$ and $g|_U=\nabla\ed{\varPhi}$. If we define $y^i:=x^i/x^1$ for each $i\in\{2,\ldots,n\}$, then $\Lie{\rho}{y^i}=0$, and therefore $\rho|_U=\beta\varPhi\partial_\varPhi$, using $\big(U,\big(\varPhi,y^2,\ldots,y^n\big)\big)$ as coordinates. Thus, equation~\eqref{eq:conf} is written on $U$ simply as $\beta\varPhi\partial_\varPhi\lambda=\beta\lambda$, whose solution is $\lambda=f\varPhi$, where $f$ is a local smooth function satisfying $\Lie{\rho}f=0$. This means that Quevedo's metrics $g^\text{I}$ are the only solution to equation~\eqref{eq:Quevedo} up to an \textit{intensive} ($\rho$-invariant) scale factor. In other words, Quevedo's metrics may be regarded as the simplest nontrivial positive definite~-- or semi-definite, depending on the value of~$\beta$~-- solution of equation~\eqref{eq:Quevedo}.
\end{Example}

Before concluding this section, we consider worth mentioning that extensive structures may be portrayed differently, depending on what feature of extensivity is considered to be the most important. For instance, instead of demanding that the transition functions on $M$ leave the radial vector field invariant, we could have required that they leave the \textit{homogeneity} of functions invariant, i.e., that they map (via the pull-back of functions) degree-1 homogeneous functions to degree-1 homogeneous functions. The next result establishes that the geometric structure that corresponds to this requirement is actually an extensive one.

\begin{Proposition}A diffeomorphism $F\colon \mathbb{R}^n\to\R^n$ is degree-$1$ homogenous if and only if for any degree-$1$ homogeneous function $f$ defined on an open subset of $\,\R^n$, $F^*f$ is a degree-$1$ homo\-ge\-neous function.
\end{Proposition}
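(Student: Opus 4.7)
The plan is to handle each implication separately, using Euler's theorem (Theorem~\ref{Theorem:Euler}) as the bridge between the infinitesimal definition~\eqref{eq:H} and the scaling property of functions. The forward direction is a direct calculation with the chain rule; the converse is obtained by testing the hypothesis on the cartesian coordinate functions, which is essentially the argument already used in the proof that~\eqref{eq:H} is equivalent to~\eqref{eq:d1hd}.

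For the forward direction, assume $F\in H$ and let $f$ be a degree-1 homogeneous function defined on an open subset $W\subset\mathbb{R}^n$ intersecting the image of $F$. Then for every $p\in F^{-1}(W)$, the chain rule together with Theorem~\ref{Theorem:Euler} and~\eqref{eq:H} gives
\begin{gather*}
\ed (F^*f)_p(R_p)=\ed f_{F(p)}\bigl({F_*}_p(R_p)\bigr)=\ed f_{F(p)}(R_{F(p)})=f(F(p))=(F^*f)(p),
\end{gather*}
so $F^*f$ satisfies~\eqref{eq:Euler} and is therefore degree-1 homogeneous by Theorem~\ref{Theorem:Euler}.

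For the converse, let $\bigl(u^1,\ldots,u^n\bigr)$ denote the cartesian coordinates on $\mathbb{R}^n$. Each $u^i$ is linear and hence degree-1 homogeneous, so by hypothesis each component $F^i:=u^i\circ F=F^*u^i$ is degree-1 homogeneous. Applying Theorem~\ref{Theorem:Euler} to every $F^i$ and expanding $R_p=u^j(p)\partial_j|_p$ yields, for every $p$ in the domain of $F$,
\begin{gather*}
{F_*}_p(R_p)=u^j(p)(\partial_jF^i)(p)\,\partial_i|_{F(p)}=F^i(p)\,\partial_i|_{F(p)}=u^i(F(p))\,\partial_i|_{F(p)}=R_{F(p)},
\end{gather*}
which is precisely~\eqref{eq:H}, so $F\in H$.

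There is no real obstacle here: the statement is essentially a reformulation of the previous proposition phrased in terms of pull-backs of functions rather than components of $F$, and the content reduces to Euler's theorem applied either to $f$ (for necessity) or to the coordinate functions $u^i$ (for sufficiency). The only subtle point is making sure that testing against the finitely many linear coordinate functions is enough to recover the infinitesimal condition~\eqref{eq:H}; this is guaranteed by the fact that the vectors $\partial_i|_{F(p)}$ form a basis of $T_{F(p)}\mathbb{R}^n$, so knowing every component of ${F_*}_p(R_p)$ determines the vector itself.
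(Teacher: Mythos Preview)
Your proof is correct and follows essentially the same route as the paper's: the forward direction is the identical chain-rule computation, and for the converse both you and the paper test the hypothesis on the cartesian coordinate functions (the paper calls them the canonical projections~$\varpi^i$) to pin down ${F_*}_p(R_p)$. The only cosmetic difference is that you compute the components of ${F_*}_p(R_p)$ directly, whereas the paper phrases it as showing that ${F_*}_p(R_p)-R_{F(p)}$ lies in $\bigcap_i\ker\ed\varpi^i_{F(p)}=\{0\}$.
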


\begin{proof}Observe that if $f$ is a real-valued function defined on an open subset of $\R^n$ and $F\colon \R^n\to\R^n$ is a diffeomorphism, then for each $p\in\R^n$,
$\ed(F^*f)_p(R_p) =\ed f_{F(p)}\circ\Der{F}{p}(R_p)$.

Let $f$ be a degree-1 homogeneous function, and suppose that $F\in H$. Then $\ed(F^*f)_p(R_p)=\ed f_{F(p)}(R_{F(p)})=f(F(p))=F^*f(p)$. Therefore, $F^*f$ is a degree-1 homogeneous function.

Conversely, if $F$ pulls back any degree-1 homogeneous function to a degree-1 homogeneous function, then for any such $f$ defined on an open subset of $\R^n$ we have that $\ed f_{F(p)}\circ\Der{F}{p}(R_p)=\ed(F^*f)_p(R_p)=F^*f(p)=f(F(p))=\ed f_{F(p)}(R_{F(p)})$. Hence, the derivative of $f$ at $F(p)$ annihilates $\Der{F}{p}(R_p)-R_{F(p)}$. Since the canonical projections $\varpi^i\colon \R^n\to\R$, with $i\in\{1,\ldots,n\}$, are degree-1 homogeneous functions, we have that $\Der{F}{p}(R_p)-R_{F(p)}\in\bigcap_{i=1}^n\ker\ed\varpi^i_{F(p)}=\{0\}$. This implies that $F\in H$.
\end{proof}

So far, we have not made any reference to the relationship between scaling and extensivity that is established by equation~\eqref{eq1ohf}. As expected, defining an extensive structure on $M$ provides a means to define \textit{scaling of equilibrium states} geometrically. Indeed, let $\varphi_t$ denote the (local) flow of $\rho$. For each $p\in M$, the integral curve of $\rho$ starting at~$p$, $\gamma(t):=\varphi_t(p)$, is defined on an open neighborhood of $t=0$, which may be written as $]{-}\varepsilon_p, \varepsilon_p[$ for some $\varepsilon_p>0$ that varies pointwise. We let $\lambda\in{}]\e{-\varepsilon_p},\e{\varepsilon_p}[$ and define $\lambda p:=\varphi_{\log\lambda}(p)$. This operation is not exactly an action of the positive real numbers on $M$. However, it satisfies the familiar properties of uniform scaling on the Euclidean space, and reproduces on $M$ the well-known relationship between scaling an extensivity.

\begin{Proposition}Let $U$ be an open subset of $M$ and $\omega\in\Omega^k(U)$, with $k\in\mathbb{N}\cup\{0\}$. Then $\omega$ is extensive if and only if
\begin{gather}\label{eq:ex4}
\varphi_t^*\omega=\e{t}\omega,
\end{gather}
for every value of $t$ for which equation~\eqref{eq:ex4} makes sense.
\end{Proposition}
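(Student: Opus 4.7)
The plan is to invoke the standard identity relating the Lie derivative of a differential form to the pullback along the flow of its generating vector field, namely
\begin{gather*}
\frac{\ed}{\ed t}\varphi_t^*\omega=\varphi_t^*(\Lie{\rho}{\omega}),
\end{gather*}
valid at every $(t,p)$ for which $\varphi_t(p)$ is defined and lies in $U$. This identity is obtained from the defining formula $\Lie{\rho}{\omega}=\frac{\ed}{\ed t}\big|_{t=0}\varphi_t^*\omega$ combined with the local group property $\varphi_{t+s}=\varphi_t\circ\varphi_s$, which, upon differentiating in $s$ at $s=0$, yields the displayed equation.

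For the forward implication, I would assume $\Lie{\rho}{\omega}=\omega$; substituting into the identity above gives the linear ODE $\frac{\ed}{\ed t}\varphi_t^*\omega=\varphi_t^*\omega$. Fixing $p\in U$ and evaluating at $p$, this is an ordinary linear ODE in the finite-dimensional vector space $\Lambda^k T_p^*M$ with initial condition $\varphi_0^*\omega|_p=\omega_p$; its unique solution is $\varphi_t^*\omega|_p=\e{t}\omega_p$ on the connected interval of $t$ for which $\varphi_s(p)\in U$ for every $s$ between $0$ and $t$. Since $p$ was arbitrary, equation~\eqref{eq:ex4} holds wherever both sides are defined.

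For the converse, I would simply differentiate both sides of $\varphi_t^*\omega=\e{t}\omega$ at $t=0$. The right-hand side yields $\omega$, and the left-hand side is $\Lie{\rho}{\omega}$ by the very definition of the Lie derivative; hence $\omega$ is extensive in the sense of Definition~\ref{Definition:edf}.

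The only subtlety worth flagging is the locality of the flow: because $\rho$ need not be complete, the equation $\varphi_t^*\omega=\e{t}\omega$ must be read pointwise, over the maximal $t$-interval around $0$ on which the integral curve of $\rho$ through $p$ remains in $U$. Once this is made explicit, however, the argument is purely the standard ODE reasoning sketched above, so there is no genuine obstacle beyond bookkeeping of domains.
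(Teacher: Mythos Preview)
Your argument is correct and follows essentially the same route as the paper: both directions hinge on the identity $\frac{\ed}{\ed t}\varphi_t^*\omega=\varphi_t^*(\Lie{\rho}{\omega})$, with the forward implication reduced to a linear ODE with initial value $\omega$ and the converse obtained by differentiating at $t=0$. The only cosmetic difference is that the paper evaluates on arbitrary vector fields $X_1,\ldots,X_k$ to obtain a scalar ODE, whereas you phrase the ODE directly in the finite-dimensional space $\Lambda^k T_p^*M$; these are equivalent.
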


\begin{proof}We first show that differential forms satisfying equation~\eqref{eq:ex4} are extensive. Suppose that $\omega\in\Omega^k(U)$ is such a differential form, defined on an open subset $U$ of $M$. Then, there is an open interval $I$ containing $t=0$ such that $(\varphi_t^*\omega-\omega)/t=\omega(\e{t}-1)/t$, for every $t\in I\setminus\{0\}$. This approaches $\omega$ as $t\to0$, i.e., $\Lie{\rho}{\omega}=\omega$.

Conversely, let $\omega\in\Omega^k(U)$. Recall that
\begin{gather}\label{eq:Lie}
\frac{\ed}{\ed t}(\varphi_t^*\omega)=\varphi_t^*\left(\Lie{\rho}{\omega}\right).
\end{gather}
Suppose now that $\omega$ is extensive. Then, for any (time-independent) $X_1,\ldots,X_k\in\mathfrak{X}(M)$, equation~\eqref{eq:Lie} is written as
\begin{gather}\label{eq:exdf}
\frac{\ed}{\ed t}\left[\left(\varphi_t^*\omega\right)(X_1\wedge\cdots\wedge X_k)\right]=(\varphi_t^*\omega)(X_1\wedge\cdots\wedge X_k).
\end{gather}
We have thus that $(\varphi_t^*\omega) (X_1\wedge\cdots\wedge X_k)=\e{t}\omega\big(X^1\wedge\cdots\wedge X_k\big)$, for values of~$t$ around $t=0$ for which~$\varphi_t^*\omega$ is defined. Since the vector fields $X_1,\ldots,X_k$ are arbitrary, equation~\eqref{eq:ex4} follows.
\end{proof}

When $k=0$, equation~\eqref{eq:ex4} reads precisely $f(\lambda p)=\lambda f(p)$, for all $\lambda\in{}]\e{-\varepsilon_p},\e{\varepsilon_p}[$ and $p\in M$ (cf.\ equation~\eqref{eq1ohf}).

The notion of \textit{uniform scaling} of states may also be taken as starting point to define an extensive structure on $M$. Namely, for any $x\in M$ we may intuitively define $\lambda x$ as $\phi^{-1}(\lambda \phi(x))$, where $(U,\phi)$ is a smooth chart whose domain contains $x$. Demanding that this definition be coordinate-independent amounts to requiring the transition functions on $M$ satisfy equation~\eqref{eq:d1hd} for all points on the Euclidean space and all values of~$\lambda$ for which the latter makes sense. Hence, an extensive structure on $M$ may be regarded as a smooth atlas whose charts preserve locally dilations on $\mathbb{R}^n$.

We have seen so far that any extensive manifold is endowed with a global vector field that has locally the form of a radial vector field. In the next section, we will show that actually such vectors embody extensive structures. Therefore, studying the conditions under which a~manifold accepts an extensive structure can be translated to questions regarding the existence of the aforementioned vector fields.

\section{Existence of extensive structures}\label{sec:existence}

This section is devoted to the following question: what kinds of manifolds may be endowed with an extensive structure? We provide a partial answer by means of identifying extensive structures with global vector fields.

As is established in equation~\eqref{eq:defDRM}, any extensive structure defines a global vector field that is locally written like a radial vector field. It is natural to ask whether any such vector field defines an extensive structure. The answer is in the affirmative, as we now show.

\begin{Proposition}\label{Proposition:3}If $M$ is endowed with a vector field $X$ and a smooth subatlas comprising charts where $X$ has the form of a radial vector field, then $M$ is furnished with an extensive structure.
\end{Proposition}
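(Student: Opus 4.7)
The plan is to unpack the two definitions and show that the hypothesized subatlas already has $\mathcal{H}$-valued transition functions; a maximal such atlas then exists by the standard extension procedure for pseudogroup-compatible atlases, and this is by definition an extensive structure.

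Write the subatlas as $\mathcal{B}=\{(U_\alpha,\phi_\alpha)\}$. The hypothesis that $X$ has the form of a radial vector field in each chart of $\mathcal{B}$ means precisely that ${\phi_\alpha}_{*q}(X_q)=R_{\phi_\alpha(q)}$ for every $q\in U_\alpha$, or equivalently ${\phi_\alpha^{-1}}_{*p}(R_p)=X_{\phi_\alpha^{-1}(p)}$ for every $p\in\phi_\alpha(U_\alpha)$. Given two overlapping charts $(U_\alpha,\phi_\alpha),(U_\beta,\phi_\beta)\in\mathcal{B}$, set $\psi_{\alpha\beta}=\phi_\beta\circ\phi_\alpha^{-1}$, defined on the open subset $\phi_\alpha(U_\alpha\cap U_\beta)\subset\mathbb{R}^n$. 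For any $p$ in this domain, the chain rule yields
\begin{gather*}
{\psi_{\alpha\beta}}_{*p}(R_p)={\phi_\beta}_{*\phi_\alpha^{-1}(p)}\bigl({\phi_\alpha^{-1}}_{*p}(R_p)\bigr)={\phi_\beta}_{*\phi_\alpha^{-1}(p)}\bigl(X_{\phi_\alpha^{-1}(p)}\bigr)=R_{\psi_{\alpha\beta}(p)},
\end{gather*}
where the first equality uses that $\phi_\alpha$ maps $X$ to $R$ and the second that $\phi_\beta$ does too. Hence $\psi_{\alpha\beta}$ satisfies equation~\eqref{eq:H}, so $\psi_{\alpha\beta}\in\mathcal{H}$. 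Thus $\mathcal{B}$ is an $\mathcal{H}$-atlas.

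To obtain an extensive structure in the sense of Definition~\ref{Definition:ev}, it remains to enlarge $\mathcal{B}$ to a maximal $\mathcal{H}$-compatible atlas. Let $\mathcal{A}_\mathcal{H}$ consist of all smooth charts $(W,\phi)$ on $M$ such that $\phi\circ\phi_\alpha^{-1}$ and $\phi_\alpha\circ\phi^{-1}$ belong to $\mathcal{H}$ for every $(U_\alpha,\phi_\alpha)\in\mathcal{B}$ with $W\cap U_\alpha\neq\emptyset$. Since $\mathcal{H}$ is a pseudogroup (it is closed under restriction, composition and inversion, as was established above), the same chain-rule computation shows that any two charts in $\mathcal{A}_\mathcal{H}$ have transition functions in $\mathcal{H}$; hence $\mathcal{A}_\mathcal{H}$ is an $\mathcal{H}$-compatible atlas containing $\mathcal{B}$. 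This is the desired extensive structure, and moreover the vector field $\rho$ it induces via equation~\eqref{eq:defDRM} coincides with $X$, because on each chart of $\mathcal{B}$ both are given by the pushforward of $R$ under $\phi_\alpha^{-1}$.

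There is no real obstacle here beyond bookkeeping: the content of the proposition is the chain-rule identity displayed above, and the only point requiring a moment's care is that one must work with the full pseudogroup-compatible atlas rather than the given subatlas, so that the resulting object matches Definition~\ref{Definition:ev}. The group property of $\mathcal{H}$ established earlier is exactly what makes this extension well defined.
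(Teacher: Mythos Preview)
Your proof is correct and follows essentially the same route as the paper: you verify via the chain rule that the transition functions between charts in which $X$ is radial lie in $\mathcal{H}$, and then pass to a maximal $\mathcal{H}$-compatible atlas. Your write-up is in fact slightly more explicit than the paper's about the maximal-atlas extension and about the resulting identity $\rho=X$, but the argument is the same.
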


\begin{proof}The result above follows upon observing that $X\in\mathfrak{X}(M)$ has the form of a radial vector field, if and only if for all $p\in M$, there exists a smooth chart $(U,\phi)$ with $p\in U$, that satisfies $X_q=\Der{\phi^{-1}}{\phi(q)}(R_{\phi(q)})$, for all $q\in U$. If $(U',\psi)$ is another smooth chart whose domain contains $p$ and such that $X_q=\Der{\psi^{-1}}{\psi(q)}(R_{\psi(q)})$, for all $q\in U'$, then the corresponding transition function belongs to the pseudogroup $\mathcal{H}$. Indeed, for all $q\in U\cap U'$, $\Der{\phi^{-1}}{\phi(q)}(R_{\phi(q)})=\Der{\psi^{-1}}{\psi(q)}(R_{\psi(q)})$. Upon applying $\Der{\psi}{q}$ to both hand sides of the last equation, we obtain that $\Der{(\psi\circ\phi^{-1})}{\phi(p)}(R_{\phi(p)})=R_{\psi(p)}$, whence $\psi\circ\phi^{-1}\in\mathcal{H}$.

As a consequence, the set of all smooth charts on which $X$ is written as a radial vector field forms an atlas compatible with the pseudogroup $\mathcal{H}$. This atlas is contained in a maximal atlas compatible with $\mathcal{H}$, which yields the desired result.
\end{proof}

We have thus proven that extensive structures are equivalent to global vector fields that are locally written as radial vector fields, which we call \emph{locally-radial vector fields}. Hence, any manifold admitting a locally-radial vector field admits an extensive structure.

A particular instance of locally-radial vector field is a non-vanishing one, as we now show.

\begin{Proposition}\label{Proposition:VR}Let $X\in\mathfrak{X}(M)$. If $p\in M$ is such that $X_p\neq0$, then $X|_U=\Der{\phi^{-1}}{\phi(p)}(R_{\phi(p)})$, for some smooth chart $(U,\phi)$ around $p$.
\end{Proposition}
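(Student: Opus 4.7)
The plan is to reduce the statement to the classical straightening (flow-box) theorem for vector fields and then apply a simple coordinate change that transforms $\partial/\partial y^1$ into the radial vector field $R$.

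Since $X_p\neq0$, the straightening theorem provides a smooth chart $(V,\psi)$ around $p$, with coordinates $\psi=(y^1,\ldots,y^n)$ and $\psi(p)=0$, in which $X|_V=\partial/\partial y^1$. I would then introduce new coordinates on $V$ by
\[
x^1:=\e{y^1},\qquad x^i:=\e{y^1}y^i\quad(i=2,\ldots,n),
\]
and set $\phi:=(x^1,\ldots,x^n)$. The map $\phi$ sends $p$ to $(1,0,\ldots,0)$, and has an explicit smooth inverse given by $y^1=\ln x^1$ and $y^i=x^i/x^1$ for $i\geq 2$, defined on any neighbourhood of $(1,0,\ldots,0)$ contained in $\{x^1>0\}$. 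Shrinking $V$ to such a neighbourhood $U$ of $p$, the pair $(U,\phi)$ is a genuine smooth chart around $p$.

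It then remains to verify that $X|_U$ has the claimed form. Because $X|_V=\partial/\partial y^1$, a direct calculation gives $X(x^1)=\partial x^1/\partial y^1=\e{y^1}=x^1$ and $X(x^i)=\partial x^i/\partial y^1=\e{y^1}y^i=x^i$ for $i\geq 2$, so on $U$ we have $X=x^i\,\partial/\partial x^i$. By equation~\eqref{eq:defDRM}, this expression coincides with $\Der{\phi^{-1}}{\phi(q)}(R_{\phi(q)})$ for every $q\in U$, which yields the required identity.

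The flow-box theorem does essentially all of the work, so no step is genuinely hard; the only real point of care is guessing the correct coordinate substitution. The naive choice $x^i=y^i$ would fail, because integral curves of $\partial/\partial y^1$ grow linearly whereas those of $R$ grow as $\e{t}v$. The exponential rescaling $x^i=\e{y^1}y^i$ (together with $x^1=\e{y^1}$) is precisely what is needed to synchronise the two flows and send the translation field to the Euler one.
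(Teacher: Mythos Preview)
Your proof is correct and is essentially identical to the paper's own argument: both invoke the flow-box theorem to straighten $X$ to $\partial/\partial y^1$, and then apply the very same coordinate change $x^1=\e{y^1}$, $x^i=y^i\e{y^1}$ to obtain $X=x^i\partial_i$. The only cosmetic difference is that the paper first writes down the general solution $x^i=\e{y^1}G^i$ of $\partial x^i/\partial y^1=x^i$ before selecting $G^1=1$, $G^i=y^i$, whereas you produce the substitution directly and verify it; your added remarks on invertibility and shrinking the domain are entirely appropriate.
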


\begin{proof}Let $p\in M$ be such that $X_p\neq 0$. Then, there exists a chart $\big(U',\big(y^1,\ldots,y^n\big)\big)$ around~$p$ such that $X|_{U'}=\partial_1$. We wish to show that there exist $n$ independent smooth functions $x^1,\ldots,x^n\in\mathrm{C}^\infty(U)$, with $U\subset U'$, satisfying $\ed x^i(X)=x^i$. Because of the form that $X$ has on $U'$, the last expression is equivalent to $\partial x^i/\partial y^1=x^i$, whose general solution is given by $x^i=\e{y^1}G^i$, for all $i\in\{1,\ldots,n\}$. In the last expression, $G^1,\ldots,G^n\in \mathrm{C}^\infty (U)$, for some open set $U$ contained in $U'$. Furthermore, each function $G^i$ satisfies $\partial G^i/\partial y^1 =0$. Setting $x^1:=\mathrm{e}^{y^1}$, and $x^i:=y^i\mathrm{e}^{y^1}$, for all $i\in\{2,\ldots,n\}$ yields a coordinate chart $(U,\phi)$ around~$p$, with $\phi=\big(x^1,\ldots,x^n\big)$. The latter satisfies $X_p=\Der{\phi^{-1}}{\phi(p)}(R_{\phi(p)})$. The result then follows from Proposition~\ref{Proposition:3}.
\end{proof}

As we mentioned before, a direct consequence of Proposition~\ref{Proposition:VR} is the following.

\begin{Corollary}If $M$ admits a non-vanishing vector field then $M$ admits an extensive structure.
\end{Corollary}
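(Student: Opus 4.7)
The plan is to combine Proposition~\ref{Proposition:VR} with Proposition~\ref{Proposition:3} in the most direct way possible. Given a non-vanishing vector field $X\in\mathfrak{X}(M)$, the hypothesis $X_p\neq 0$ of Proposition~\ref{Proposition:VR} is satisfied at every $p\in M$, so for each such $p$ one obtains a smooth chart $(U_p,\phi_p)$ containing $p$ in which $X$ takes the form $X_q=\Der{\phi_p^{-1}}{\phi_p(q)}(R_{\phi_p(q)})$ for every $q\in U_p$.

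Next I would assemble these charts into a single subatlas $\mathcal{B}=\{(U_p,\phi_p)\}_{p\in M}$. The crucial observation is that the family $\{U_p\}_{p\in M}$ covers $M$ (by construction, $p\in U_p$), and on each $U_p$ the vector field $X$ is, by the preceding step, written as a radial vector field under $\phi_p$. Thus $\mathcal{B}$ is a smooth subatlas in which $X$ has the form of a radial vector field.

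At this point the hypothesis of Proposition~\ref{Proposition:3} is met verbatim: $M$ is endowed with a vector field $X$ and a smooth subatlas (namely $\mathcal{B}$) comprising charts in which $X$ is radial. Invoking that proposition directly yields an extensive structure on $M$, completing the proof.

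There is essentially no obstacle: the corollary is a bookkeeping consequence of the two propositions that precede it. The only point that warrants explicit mention is that Proposition~\ref{Proposition:VR} is applied \emph{pointwise} over all of $M$ — something legitimate precisely because $X$ is assumed to be non-vanishing everywhere — and the resulting charts automatically assemble into the kind of subatlas required by Proposition~\ref{Proposition:3}; compatibility with the pseudogroup $\mathcal{H}$ on overlaps is then handled internally by that proposition.
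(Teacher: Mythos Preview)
Your argument is correct and matches the paper's approach: the paper simply declares the corollary a direct consequence of Proposition~\ref{Proposition:VR} (whose proof already invokes Proposition~\ref{Proposition:3}), and you have merely spelled out the obvious assembly of the pointwise charts into a covering subatlas. There is nothing to add.
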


It is evident though that being endowed with a non-vanishing vector field is not a necessary condition for a manifold to possess an extensive structure. A straightforward illustration of this claim is the Euclidean space: its radial vector field vanishes at the origin. Yet more, in general, the vector field $\rho$ on an extensive manifold $(M,\mathcal{A}_\mathcal{H})$ may contain countably many singularities. This is because $p$ is a singularity of $\rho$ (i.e., $\rho_p=0$) if and only if an extensive coordinate chart (and hence, all of them containing~$p$) is centered at $p$. Since coordinate mappings are diffeomorphisms, each extensive domain may contain only one singularity of $\rho$, which implies that the set of singularities of $\rho$ is discrete. Because~$M$ is second-countable, $\rho$ has countably many singularities.

According to the previous paragraph, we might think that a global vector field on $M$ with a discrete set of singularities yields an extensive structure on~$M$ (if that were the case, any manifold would admit an extensive structure). Nonetheless, this turns out to be false, as we make evident in the next example.

\begin{Example}Consider the Euclidean plane $\R^2$ with its canonical linear and smooth structures. We define $X\in\mathfrak{X}\big(\R^2\big)$ as $X=-x\partial_y+y\partial_x$, where $(x,y)$ are the cartesian coordinates on the plane and the symbols $\partial_x$ and $\partial_y$ denote the vector fields of the holonomic frame thereby induced.

Let $o$ denote the point $(0,0)\in\mathbb{R}^2$. According to Proposition~\ref{Proposition:VR}, there must exist a coordinate chart $(U,(w,z))$ around each $p\neq o$, such that $X|_U=w\partial_w+z\partial_z$. Indeed, if $\theta$ and $r$ denote the polar coordinates on the plane, then $w=\e{\theta}$ and $z=r\e{\theta}$ are extensive coordinate functions of the extensive structure that $X$ defines on $\R^2\setminus\{o\}$.

Nevertheless, we claim that it is impossible to construct similar functions around $o$. This is the case because the integral curves of $X$ are circles, whereas those of a radial vector field are lines.
\end{Example}

The example above provides some information about the local structure of the flow of a vector field around a singularity, if this vector is to define an extensive structure on a manifold. As we have observed, $X$ may have countably many singularities, but these must be of a particular kind. Studying the conditions over a vector field so that it is locally radial around a singularity is a path to answering the question of existence of extensive structures, and shall be the topic of future work.

We return to the main subject of this paper in the next section, where we deal in general terms with an important geometric property of the heat 1-form in the context of extensive structures.

\section{Submanifolds transversal to the extensive structure}\label{sec:transversal}

We begin this section by pointing out a well-known class of manifolds that are extensive. Recall that a manifold is \textit{affine} if it is endowed with an atlas compatible with the pseudogroup of affine transformations on the Euclidean space \cite{Goldman1981a}. If the coordinate transformations on this manifold are further restricted to be linear mappings, the resulting geometric structure thereon is called \textit{radiant} \cite{Goldman1984a}. The latter turns out to be relevant in the context of thermodynamics, since it provides an appropriate setting for a rigorous description of Ruppeiner geometry \cite{garciaariza2018a} (see Example~\ref{Example:GTD}). Because every linear transformation is a degree-1 homogeneous diffeomorphism, radiant manifolds are examples of extensive manifolds. Obviously, every radiant manifold is endowed with a locally-radial vector field $\rho$. The immersed submanifolds that are transversal to~$\rho$ are locally defined by extensive functions (and are therefore embedded). In this section, we show that the same holds for extensive manifolds in general.

Before proving the above-mentioned result, let us briefly discuss its importance in the context of equilibrium thermodynamics.

As we have said, the manifold of states of a thermodynamic system in equilibrium is a smooth $n$-dimensional extensive manifold. Different thermodynamic systems may share a common space of states (including its extensive structure), as is the case of hydrostatic systems, for instance. The difference between one system and another (e.g., an ideal gas and a van der Waals gas) lies in the so-called \emph{fundamental equation}. This is a coordinate expression for a \emph{thermodynamic potential} $\varPhi\in \mathrm{C}^\infty(M)$.

Two basic thermodynamic potentials (from which any other can be derived via a Legendre transform) are the \emph{internal energy} of the system and its \emph{entropy}. The knowledge of any of these two determines uniquely a thermodynamic system. It turns out that also the infinitesimal heat $\vartheta$ and the locally-radial vector field $\rho$ specify a thermodynamic system. Indeed, since $\vartheta$ is extensive, integrable, and has $\rho$ as a transversal symmetry, then \cite{belgiorno2010a, saurel1905a}
\begin{gather}\label{eq:Belgiorno}
\frac{\vartheta}{\vartheta(\rho)}=\ed \ln S
\end{gather}
for some function local function $S$. The latter is extensive, as can be readily seen upon evaluating both hand sides on $\rho$. Equation~\eqref{eq:Belgiorno} has two important consequences. First, it establishes that thermodynamics is fully determined by the geometry of $M$, considering that $\vartheta$ and $\rho$ comprise it. The second one is that the adiabatic hypersurfaces of~$M$ (the integral manifolds of $\vartheta$) are locally defined by an extensive function. This actually holds for \textit{any} manifold that is transversal to $\rho$. In order to prove this, let us recall that if $\imath\colon N\hookrightarrow M$ is a smooth embedded submanifold of~$M$, a smooth function $f$ defined on an open subset~$U$ of~$M$ is a~\emph{local defining function} for~$N$ if $U\cap\imath(N)$ is a regular level set of $f$, this is, if $U\cap \imath(N)=f^{-1}(c)$, for some regular value~$c$ of~$f$~\cite{lee2013a}. We say that $N$ is \emph{locally defined by extensive functions} if $N$ admits an extensive function as a~local defining function in a neighborhood of each of its points.

\begin{Theorem}A submanifold of $M$ containing no singular points of $\rho$ is transversal to $\rho$ if and only if it is locally defined by nonvanishing extensive functions.
\end{Theorem}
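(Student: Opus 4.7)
The plan is to establish both implications using a local chart in which $\rho$ is aligned with a single coordinate direction. Because $\rho$ is a vector field and ``transversal'' is meant in the sense $T_pN\oplus\mathrm{span}(\rho_p)=T_pM$, I will treat $N$ as a hypersurface (so a single local defining function is at issue).

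The reverse direction is almost immediate. Suppose near $p\in N$ there is an extensive $f\in\mathrm{C}^\infty(U)$ with $U\cap N=f^{-1}(c)$ for a regular value $c$, and with $f$ nowhere zero on $U$. Then $c=f(q)\neq0$ for any $q\in N\cap U$, and the defining property $\ed f(\rho)=f$ evaluated at $q$ gives
\[
 \ed f_q(\rho_q)=f(q)=c\neq0.
\]
Since $T_qN=\ker\ed f_q$, this forces $\rho_q\notin T_qN$, whence $N$ is transversal to $\rho$ at every point where such a defining function exists.

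For the forward direction, fix $p\in N$; by hypothesis $\rho_p\neq0$, so Proposition~\ref{Proposition:VR} supplies an extensive chart $(U,(x^1,\dots,x^n))$ around $p$ with $\rho|_U=x^i\partial_i$. Some coordinate of $p$ is nonzero; after relabelling assume $x^1(p)\neq0$. Following the change of variables already exploited in Example~\ref{Example:GTD}, I would pass on $\{x^1\neq0\}\cap U$ to the chart $(x^1,y^2,\dots,y^n)$ with $y^i:=x^i/x^1$. In these coordinates
\[
 \rho=x^1\partial_{x^1},\qquad \Lie{\rho}{y^i}=0,
\]
so that the $y^i$ are intensive. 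Transversality of $\rho$ to $N$ at $p$ reduces to $\partial_{x^1}|_p\notin T_pN$, and the implicit function theorem yields a neighborhood $W\subset U$ of $p$ and a smooth function $h$ of $(y^2,\dots,y^n)$, with $h(p)=x^1(p)\neq0$, such that in the chart $N\cap W=\{x^1=h(y^2,\dots,y^n)\}$. After shrinking $W$ so that $h$ stays bounded away from zero, I would define $f:=x^1/h$. Then $f$ is smooth and nowhere zero on $W$; the level set $f^{-1}(1)$ coincides with $N\cap W$; and
\[
 \ed f(\rho)=x^1\partial_{x^1}\!\bigl(x^1/h\bigr)=x^1/h=f,
\]
so $f$ is extensive. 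Finally, the $\ed x^1$-coefficient of $\ed f_p$ equals $1/h(p)\neq0$, so $1$ is a regular value and $f$ is a bona fide local defining function.

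The main obstacle I anticipate is engineering $f$ to be \emph{simultaneously} extensive, constant on $N$, and nowhere zero. The usual implicit-function defining function $x^1-h$ is not extensive, while adding or subtracting constants to enforce non-vanishing breaks $\ed f(\rho)=f$. The crucial trick is to \emph{divide} the extensive coordinate $x^1$ by the intensive function $h$: this absorbs the $\rho$-derivative correctly, makes the value of $f$ on $N$ equal to the nonzero constant $1$, and therefore automatically guarantees non-vanishing on a neighborhood of $N$.
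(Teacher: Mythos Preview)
Your argument is correct in both directions. The reverse implication is essentially the paper's idea, though you compress the paper's coordinate computation into the single observation $\ed f_q(\rho_q)=f(q)=c\neq0$, which is cleaner. In the forward direction the paper takes a slightly different route: it invokes a straightening chart $(y^1,\dots,y^n)$ with $\rho=\partial_1$, claims this is already a slice chart for $N$ (a step that, strictly speaking, needs an additional adaptation using the implicit function theorem), and then sets $f=\e{y^1}$. Your construction is more explicit: you pass to $(x^1,y^2,\dots,y^n)$ with $y^i=x^i/x^1$ and $\rho=x^1\partial_{x^1}$, apply the implicit function theorem directly to obtain $N=\{x^1=h(y)\}$, and take $f=x^1/h$. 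The two defining functions are in fact the same object under the change $y^1=\log x^1$ and the slice-chart shift $z^1=y^1-\log h$, since then $\e{z^1}=x^1/h$; so the difference is one of presentation rather than substance. What your route buys is an honest handling of the slice-chart step and a natural normalization ($f\equiv1$ on $N$), while the paper's $\e{y^1}$ makes extensivity and non-vanishing immediate without having to divide by~$h$.
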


\begin{proof}We prove first that if $U$ is an open subset of $M$ and $f\in \mathrm{C}^\infty(U)$ is extensive, then $f^{-1}(c)$ is transversal to $\rho$, provided that $c\neq0$ is a regular value of $f$. In order to do this, it suffices to show that the regular level sets of extensive variables are transversal to $\rho$, since an extensive function is an extensive variable in a neighborhood of any $p\in U$ such that $\ed f_p\neq0$. Thus, we let $\big(U,\big(x^1,\ldots,x^n\big)\big)$ be an extensive chart and $c$ be a regular value of~$x^1$. If we denote by~$\imath$ the inclusion of $\big(x^1\big)^{-1}(c)$ into $M$, then $\ed\big(\imath^* x^1\big)=0$. This means that for any $p\in\big(x^1\big)^{-1}(c)$ and $v\in\mathrm{T}_p \big(x^1\big)^{-1}(c)$, $\Der{\imath}{p}(v)=a^2{\partial_2}_{\imath(p)}+\cdots+a^n{\partial_n}_{\imath(p)}$. The only common element of $\Der{\imath}{p}(\mathrm{T}_p\big(x^1\big)^{-1}(c))$ and the span of $\rho_{\imath(p)}$ is zero. Indeed, if $\Der{\imath}{p}v=a\rho_{\imath(p)}$, for some $a\in\R$, $0=\ed\big(\imath^*x^1\big)_p(v)=a\ed x^1_{\imath(p)}(\rho_{\imath(p)})=a\imath^*x^1(p)=ac$, which implies that $a=0$. Hence, the tangent space to~$M$ at~$p$, $\mathrm{T}_pM$, may be written as the direct sum of $\Der{\imath}{p}\big(\mathrm{T}_p \big(x^1\big)^{-1}(c)\big)$ and the span of~$\rho_{\imath(p)}$, for any $p\in \big(x^1\big)^{-1}(c)$, and thus $\big(x^1\big)^{-1}(c)$ is transversal to~$\rho$.

In consequence, the regular level hypersurfaces (submanifolds of codimension 1) of extensive functions are transversal to $\rho$. Hence, if $\imath\colon N\hookrightarrow M$ is a manifold locally defined by nonvanishing extensive functions, it is transversal to $\rho$.

Conversely, suppose that $N$ is transversal to $\rho$ and let $p\in N$. Since $\rho_{\imath(p)}\neq0$, there exists a~chart $\big(U,\big(y^1,\ldots,y^n\big)\big)$ around $\imath(p)$ such that $\rho|_U=\partial_1$, where $\imath$ denotes the inclusion of $N$ into $M$. The transversality of $N$ to $\rho$ implies that $\ed (\imath^*y^1)_p=0$. Hence, $\big(U,\big(y^1,\ldots,y^n\big)\big)$ is a slice chart for $N$ around~$p$, and because the existence of such a chart is guaranteed for any point of~$N$, it follows that it is embedded. We define $f\in C^\infty(U)$ as $f:=\e{y^1}$, which is both nonvanishing and extensive: $\ed f(\rho)=f\ed y^1(\rho)=f$. Furthermore, $\ed (\imath^*f)_p=0$, meaning that $\imath^*f$ is constant, i.e., $U\cap\imath(N)=f^{-1}(c)$, for some nonzero $c\in\R$. The latter is a regular value of $f$, since $\ed y^1_{\imath(p)}\neq 0$, as follows from $\ed y^1_{\imath(p)}(\rho_{\imath(p)})=1$. We have thus proven that $N$ is locally defined by nonvanishing extensive functions, as desired.
\end{proof}

The function $S$ in equation~\eqref{eq:Belgiorno} is unique up to a constant scale factor. The same holds for the locally-defining functions of the submanifolds of $M$ transversal to $\rho$.

\begin{Proposition}The local defining functions of manifolds transversal to $\rho$ are unique modulo scale.
\end{Proposition}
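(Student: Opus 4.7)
The plan is to show that any two nonvanishing extensive local defining functions for $N$ at a point $p$ differ by a multiplicative constant, by exploiting that their ratio is simultaneously $\rho$-invariant and constant along $N$, and then showing in a well-chosen chart that these two properties force constancy.

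First I would set up notation. Let $\imath\colon N\hookrightarrow M$ be transversal to $\rho$, let $p\in N$, and suppose $f\in\mathrm{C}^\infty(U_f)$ and $g\in\mathrm{C}^\infty(U_g)$ are two nonvanishing extensive local defining functions around $p$, so that $U_f\cap\imath(N)=f^{-1}(c)$ and $U_g\cap\imath(N)=g^{-1}(d)$ for regular values $c,d\neq 0$. Shrink to a common open neighborhood $U\subset U_f\cap U_g$ of $\imath(p)$. On $U$, set $h:=g/f$, which is well defined since $f$ never vanishes.

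Next I would show that $h$ is $\rho$-invariant. A direct Leibniz computation gives
\begin{gather*}
\ed h(\rho)=\frac{\ed g(\rho)}{f}-\frac{g\,\ed f(\rho)}{f^2}=\frac{g}{f}-\frac{g}{f}=0,
\end{gather*}
so $\Lie{\rho}{h}=0$. Moreover, on $U\cap\imath(N)$ the function $h$ equals $d/c$, a constant.

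The key step is then to pass to a local model for $\rho$. By Proposition~\ref{Proposition:VR} and the proof of the preceding theorem, after possibly shrinking $U$ further there is a chart $\big(U,\big(y^1,\ldots,y^n\big)\big)$ around $\imath(p)$ in which $\rho|_U=\partial_1$ and $U\cap\imath(N)=\{y^1=\text{const}\}$. In these coordinates, $\rho$-invariance of $h$ is exactly the condition $\partial h/\partial y^1=0$, so $h$ depends only on $y^2,\ldots,y^n$. But $h$ is constant on the slice $\{y^1=\text{const}\}$, which is parametrized precisely by $\big(y^2,\ldots,y^n\big)$; hence $h$ equals the constant $d/c$ on all of $U$. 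This yields $g=(d/c)\,f$ on $U$, proving uniqueness modulo a constant scale factor.

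The only subtle point I anticipate is making sure the two defining functions share a common domain on which the argument is carried out, which is handled by the shrinking step above; the rest is a straightforward application of the slice coordinates already produced in the proof of the previous theorem.
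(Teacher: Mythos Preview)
Your argument is correct. The paper's proof proceeds differently: it asserts that two extensive local defining functions $f$ and $\varPhi$ satisfy $\ed f\wedge\ed\varPhi=0$, so that $\varPhi=\hat\varPhi\circ f$ for some real function $\hat\varPhi$, and then uses extensivity of $\varPhi$ to obtain the ODE $\hat\varPhi(t)=t\hat\varPhi'(t)$, whose solutions are linear. Your route avoids the functional-dependence step entirely: you form the ratio $h=g/f$, observe from extensivity that $\Lie{\rho}{h}=0$, and then exploit the slice chart $\big(y^1,\ldots,y^n\big)$ with $\rho=\partial_1$ and $N=\{y^1=\mathrm{const}\}$ (already constructed in the preceding theorem) to see that $h$ depends only on $\big(y^2,\ldots,y^n\big)$ yet is constant on a full $y^1$-slice, hence constant. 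Your argument is arguably more self-contained, since the paper's claim $\ed f\wedge\ed\varPhi=0$ on all of $U$ (not merely along $N$) is not immediate from the bare definition of a local defining function; your approach sidesteps this by working directly with the ratio. The paper's method, on the other hand, is chart-free once the functional dependence is granted, and makes the role of the homogeneity ODE explicit. One minor point worth stating in your write-up: when you conclude constancy from ``independent of $y^1$'' plus ``constant on one slice'', you are implicitly assuming the chart domain is a product (e.g., a coordinate box), which is harmless after a further shrinking but should be said.
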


\begin{proof}Let $\imath\colon N\hookrightarrow M$ be an immersed submanifold that is transversal $\rho$. Let $f\in\mathrm{C}^\infty(U)$ be a local defining function for $\imath$ around a point $p\in N$, and suppose that $\varPhi\in\mathrm{C}^\infty(U)$ is another local defining function for $\imath$ around $p$. Then, $\ed f\propto\ed\varPhi$, which means that $\ed f\wedge\ed\varPhi=0$. Thus, $\varPhi$ may be written as a function of $f$, this is, there exists a real-valued function $\hat\varPhi$ defined on an open interval $I$ containing the image of $f$ such that $\varPhi=\hat\varPhi\circ f$. Since $\varPhi$ is extensive, we have that $\hat\varPhi$ must satisfy
\begin{gather}\label{eq:uS}
\hat\varPhi(t)=t\hat\varPhi'(t),
\end{gather}
for every $t$ lying in the image of $f$, where $\hat\varPhi'$ denotes the derivative of $\hat\varPhi$. The solution to equation~\eqref{eq:uS} in an open interval containing the image of $f$ is $\varPhi(t)=kt$, with $t\in I$. This means that $\varPhi=kf$, as we wished to prove.
\end{proof}

We conclude this section by noting that the three conditions imposed upon the heat 1-form of a thermodynamic system~-- integrability, extensivity, and transversality~-- are independent from each other.

We begin by observing that forms that are transversal to $\rho$ are not necessarily extensive. Given an extensive 1-form $\alpha$ that is transversal to $\rho$, a straightforward example of a 1-form that is transversal to $\rho$ but is not extensive is $f\alpha$, where $f$ is an extensive function. If the manifold in question is 2-dimensional, then $f\alpha$ is integrable, whence transversality and integrability do not guarantee extensivity.

It is also true that extensivity is not a sufficient condition for transversality, as we illustrate below.

\begin{Example}Consider the set of points $(x,y)$ in the Euclidean plane with both $x>0$ and $y>0$, which we denote by $\mathbb{R}^2_+$, furnished with the smooth and extensive structures that the Euclidean plane induces thereon.

The global 1-form $\alpha:=(1+y/x)\ed x-(1+x/y)\ed y$ is extensive, yet $\alpha(\rho)=0$, whence it is not transversal to $\rho$.
\end{Example}

Notice that the 1-form $\alpha$ of the example above is integrable. This shows that not even integrable extensive forms are necessarily transversal to the extensive structure of a manifold. In brief words, extensivity and integrability do not imply transversality.

Likewise, 1-forms transversal to $\rho$ are not necessarily integrable, as we now show.

\begin{Example}The infinitesimal work on a thermodynamic system is represented by an extensive 1-form $\varepsilon\in\Omega^1(M)$. According to the First Law of thermodynamics, the 1-form $\vartheta+\varepsilon$ is closed, and a local potential is the internal energy~$U$ of the system. A system is called \textit{mechanically conservative} if $\imath^*\ed{\varepsilon}=0$, for any integral submanifold $\imath\colon \varSigma\hookrightarrow M$ of $\vartheta$ (cf.~\cite{edelen2005a}).

Unlike infinitesimal heat, infinitesimal work is not always integrable. From the first law and the second law, we have that $\varepsilon\wedge\ed{\varepsilon}= \ed{\vartheta}\wedge\ed{U}$, which is not zero in general. For instance, $\varepsilon\wedge\ed{\varepsilon}=0$ for an ideal gas, whereas $\varepsilon\wedge\ed{\varepsilon}=aS/\big(cRV^2\big)+[cN(b-V/N)]^{-1}U/N\ed{U}\wedge\ed{V}\wedge\ed{N}$ for a van der Waals fluid. This is readily obtained using the first two laws of thermodynamics and the corresponding fundamental equations: $S=NR\ln\big(K_1U^cV/N^{c+1}\big)$ for the ideal gas, and $S=NR\ln\big[K_2(V/N-b)(U/N+Na/V)^c\big]$ for the van der Waals fluid, where $a$, $b$, $c$, $K_1$, $K_2$, and $R$ are positive constants~\cite{callen1985a}. As is customary, the volume $V$ and the number of particles $N$, together with $U$, induce an extensive structure on the manifold of equilibrium states (excluding the coexistence region in the case of a van der Waals fluid).
\end{Example}

Observe that infinitesimal work is transversal to the extensive structure for both ideal and van der Waals gases. This means that neither transversality nor \textit{extensivity} imply integrability.

\section{Concluding remarks}\label{sec:conclusion}

In several instances, differential geometry seems to be a powerful tool for the study of equilibrium thermodynamics. The aim of this paper was to contribute in establishing firmly the foundations of these geometric approaches. The results of this work allow us to presume the importance of the global topological structures appearing in thermodynamics. Indeed, the fact that global locally-radial vector fields and extensive structures are equivalent might restrict the topology of a manifold of states \textit{a priori}. For instance, if a two-dimensional manifold of states has a~nowhere-vanishing~$\rho$, then it cannot be a sphere. The physical consequences that the topological properties of~$M$ might yield has not received any attention so far.

An important noninvariant feature of thermodynamic potentials that was disregarded in this work is convexity. The reason to overlook it is that this notion requires further geometric or algebraic structure. An attempt to define it in a coordinate-free fashion could take into account that Ruppeiner geometry and geometrothermodynamics endow $M$ with a (degenerate) metric tensor, which might allow for a definition similar to that of convex functions on Riemannian manifolds. Another possibility, however, is using the integral curves of $\rho$ as is done to define geodesically convex functions. Addressing this question is necessary for a coordinate-free statement of the principle of maximum entropy.

Together with~\cite{belgiorno2010a}, the geometric definition of extensivity that was presented here sheds new light on the geometry of thermodynamics. If we consider that a thermodynamic system is specified by the triad $(M,\vartheta,\rho)$, where $\vartheta$ and $\rho$ satisfy $\vartheta\wedge\ed\vartheta=0$, $\Lie{\rho}{\vartheta}=\vartheta$, and $\vartheta(\rho)\neq0$, then fundamental equations may be regarded as geometric equations involving only the geometric structure of~$M$, according to Belgiorno's equation (equation~\eqref{eq:Belgiorno}). Thus, establishing \textit{equations of motion} for the fields $\vartheta$ and $\rho$ would yield a non-phenomenological approach to macroscopic thermodynamics, written in a language that is common to other geometric physical theories. This approach would \textit{yield} the extensive structure and the heat form of each system, so that resorting to previously known ones would be unnecessary (cf.\ Example~\ref{Example:ig}).

According to the paragraph above, entropy is a distinguished potential, as it arises from the geometric structure on $M$. This privileges Ruppeiner's metric tensor over all other Hessian metrics that may be defined on $M$ by Hessians of thermodynamic potentials \cite{bravetti2014b}. It is important to point out that the latter are particular examples of information geometries \cite{brody2009a, crooks2007a}. The role of extensive structures in this broader context is still unknown.

Nothing was mentioned about extensivity in the thermodynamic phase space $P$. The reason is that the corresponding structure is not an extensive one. Indeed, let us suppose that $\big(P,\big(w,q^1,\ldots,q^n,p_1,\ldots,p_n\big)\big)$ is a global Darboux chart. We define a Legendre submanifold $\imath\colon M\hookrightarrow P$ by $\varPhi:=\imath^*w$, $x^i:=\imath^*q^i$, and $\imath^*p_i:=\partial_i\varPhi$, for every $i\in\{1,\ldots,n\}$. As customary, we assume that $x^1,\ldots,x^n$ are global extensive variables on $M$ and that $\varPhi$ satisfies $\Lie{\rho}{\varPhi}=\beta\varPhi$, for some real number $\beta$. The vector field $\rho:=x^i\partial_i$ defines an extensive structure on~$M$. A straightforward computation yields $\Der{\imath}{}\rho=\sigma\circ\imath$, where $\sigma:=\beta w\partial w+q^i\partial/\partial q^i+(\beta-1)p_j\partial/\partial p_j$. The latter is potentially useful to induce a notion of extensivity on~$P$, because we recover therefrom that $w$ and the functions $q^i$ are extensive variables (of different degrees) on $P$. Furthermore, $\Lie{\sigma}{\varTheta}=\beta\varTheta$. Like we mentioned before, this \textit{extensive structure} on $P$ does not coincide with the notion of extensivity that we presented in this paper, since it is not induced by a locally-radial vector field (cf.~\cite{schaft2018a}). This definition relies rather on the existence of global Darboux coordinates, which should be avoided to allow for less trivial topological structures on $P$. Moreover, the vector field $\sigma$ is determined modulo $\ker\Der{\imath}{}$. At this point, there is no straightforward feature of $\sigma$ that may help us characterize it in a coordinate-free fashion. This holds also in the case when $P$ is considered to be the matrix Lie group~$H_n$ of~\cite{preston2008a}. For instance, $\sigma$ does not belong to the corresponding Lie algebra, as follows upon observing that it vanishes at the identity. Hence, a coordinate-free notion of extensivity in the contact-geometric setting is a nontrivial task worth addressing.

There is a more general concept of extensivity that includes all known thermodynamic systems. It allows for the possibility of having different degrees of extensivity for each extensive variable in an extensive chart, and is particularly important in the case of Kerr--Newman black holes \cite{belgiorno2003a, quevedo2017a}. The ideas that we have presented in this work might help to describe this more general notion under a coordinate-free approach.

\subsection*{Acknowledgements}

The author wishes to thank Gerardo F.~Torres del Castillo, Merced Montesinos, Hernando Quevedo, and Alessandro Bravetti for their valuable comments regarding this work. The referees are acknowledged for helping to substantially improve this manuscript with their reports. This work was financially supported by VIEP, BUAP.

\pdfbookmark[1]{References}{ref}
\LastPageEnding

\end{document}